\title{\LARGE \bf
Viability under Degraded Control Authority
}
\author{Hamza~El-Kebir$^{1}$, Richard Berlin$^{2}$, Joseph~Bentsman$^{3}$, and Melkior~Ornik$^{4}$
\thanks{$^{1}$H. El-Kebir is with the Department
of Aerospace Engineering, University of Illinois Urbana-Champaign, Urbana,
IL, 61801 USA
{\tt\small elkebir2@illinois.edu}}%
\thanks{$^{2}$R. Berlin is with the Department of Trauma Surgery, Carle Hospital and the University of Illinois Urbana-Champaign, Urbana, IL 61801 USA.}%
\thanks{$^{3}$J. Bentsman is with the Department
of Mechanical Science and Engineering, University of Illinois Urbana-Champaign, Urbana,
IL, 61801 USA
{\tt\small jbentsma@illinois.edu}}%
\thanks{$^{4}$M. Ornik is with the Department
of Aerospace Engineering and the Coordinated Science Laboratory, University of Illinois Urbana-Champaign, Urbana,
IL, 61801 USA
{\tt\small mornik@illinois.edu}}%
\thanks{Research reported in this publication was supported by the National Institute of Biomedical Imaging and Bioengineering of the National Institutes of Health under award number R01EB029766. The content is solely the responsibility of the authors and does not necessarily represent the official views of the National Institutes of Health.}
}
\renewcommand{\vec}[1]{\boldsymbol{\mathbf{#1}}}
\newtheorem{theorem}{Theorem}
\newtheorem{lemma}{Lemma}
\newtheorem{proposition}{Proposition}
\theoremstyle{definition}
\newtheorem{definition}{Definition}[section]
\newtheorem{problem}{Problem}
\newtheorem{assump}{Assumption}
\theoremstyle{remark}
\newtheorem{remark}{Remark}
\newtheorem{corollary}{Corollary}
\newcommand{\transp}{^\mathsf{T}}
\newcommand{\inv}{^{-1}}
\newcommand{\dhat}[1]{\ThisStyle{\setbox0=\hbox{$\SavedStyle#1$}%
  \stackengine{0pt}{\SavedStyle#1}{\SavedStyle\hspace{.2\ht0}%
  \hat{\vphantom{#1}}\kern\dimexpr2.2\LMpt+.7pt\relax\hat{\vphantom{#1}}}{O}{c}{F}{T}{L}}%
}
\newcommand{\dcheck}[1]{\ThisStyle{\setbox0=\hbox{$\SavedStyle#1$}%
  \stackengine{0pt}{\SavedStyle#1}{\SavedStyle\hspace{.2\ht0}%
  \check{\vphantom{#1}}\kern\dimexpr2.2\LMpt+.7pt\relax\check{\vphantom{#1}}}{O}{c}{F}{T}{L}}%
}
\newcommand{\hatcheck}[1]{\ThisStyle{\setbox0=\hbox{$\SavedStyle#1$}%
  \stackengine{0pt}{\SavedStyle#1}{\SavedStyle\hspace{.2\ht0}%
  \hat{\vphantom{#1}}\kern\dimexpr2.2\LMpt+.7pt\relax\check{\vphantom{#1}}}{O}{c}{F}{T}{L}}%
}
\newcommand{\checkhat}[1]{\ThisStyle{\setbox0=\hbox{$\SavedStyle#1$}%
  \stackengine{0pt}{\SavedStyle#1}{\SavedStyle\hspace{.2\ht0}%
  \check{\vphantom{#1}}\kern\dimexpr2.2\LMpt+.7pt\relax\hat{\vphantom{#1}}}{O}{c}{F}{T}{L}}%
}
\renewcommand{\vec}[1]{\boldsymbol{\mathbf{#1}}}
\begin{document}

\maketitle
\thispagestyle{empty}
\pagestyle{empty}

\begin{abstract}
In this work, we solve the problem of quantifying and mitigating control authority degradation in real time. Here, our target systems are controlled nonlinear affine-in-control evolution equations with finite control input and finite- or infinite-dimensional state. We consider two cases of control input degradation: finitely many affine maps acting on unknown disjoint subsets of the inputs and general Lipschitz continuous maps. These degradation modes are encountered in practice due to actuator wear and tear, hard locks on actuator ranges due to over-excitation, as well as more general changes in the control allocation dynamics. We derive sufficient conditions for identifiability of control authority degradation, and propose a novel real-time algorithm for identifying or approximating control degradation modes. 
We demonstrate our method on a nonlinear distributed parameter system, namely a one-dimensional heat equation with a velocity-controlled moveable heat source, motivated by autonomous energy-based surgery.
\end{abstract}


%
\IEEEpeerreviewmaketitle

\section{Introduction}

In control systems, fault detection and mitigation is key in ensuring prolonged safe operation in safety-critical environments \cite{Blanke2006}. Any physical system undergoes gradual degradation during its operational life cycle, for instance due to interactions with the environment or from within as a result of actuator wear and tear. Gradual degradation or impairment, as the name suggests, often reduces the performance of a system in cases when potential degradation modes were not taken into account during control synthesis. Fault tolerance is a key property of systems that are capable of mitigating or withstanding system faults, including gradual degradation.

A number of stochastic approaches to fault identification and mitigation have been developed in the past, with the main objective of estimating the \emph{remaining useful life} (RUL) of a system, and how this metric is influenced by the controller. Mo and Xie \cite{Mo2016} developed an approach to approximate the loss in effectiveness cause by actuator component degradation using a reliability value. Their method relies on frequency domain analysis using the Laplace transform, which is limited to linear systems; in turn, proposed reliability improvements hinge on the use of a PID controller strategy and rely on a particle swarm optimization routine, which is highly restrictive with regard to runtime constraints and convergence guarantees. A similar approach was developed by Si \emph{et al.} \cite{Si2020}, where reliability was assessed using an event-based Monte Carlo simulation approach, wherein potential degradation modes are simulated \emph{en masse}, further limiting the applicability of this method. This is due to the intractable number of potential failure modes that may be encountered in practice, which would demand a very large number of Monte Carlo simulations.


In the deterministic setting, Wang \emph{et al.} \cite{Wang2014a} considered control input map degradation and actuator saturation in discrete-time linear systems, where a fault-tolerant control is developed by solving a constrained optimization problem. Given the discrete-time linear system setting, \cite{Wang2014a} uses efficient linear matrix inequality (LMI) techniques for controller synthesis. However, the class of actuator degradations considered in \cite{Wang2014a} is limited to linear diagonal control authority degradation with input saturation. In the context of switching systems, Niu \emph{et al.} \cite{Niu2022} considered the problem of active mode discrimination (AMD) with temporal logic-constrained switching, where a set of known switching modes was known \emph{a priori}. The AMD problem rests on a nonlinear optimization routine, which depends directly on temporal logic constraints and known switching modes that are often not known in advance.

In the present work, we consider a class of faults, which we refer to as \emph{actuator degradation}. The latter may arise as a result of wear and tear, software errors, or even adversarial intervention. Considering the following nonlinear control-affine dynamics
$
    \dot{x}(t) = f(x(t)) + g(x(t)) u(t),
$
we define input degradation modes of the form
$
	\dot{\bar{x}}(t) = f(\bar{x}(t)) + R g(\bar{x}(t)) P u(t),
$
where $P$ and $R$ are two unknown time-varying maps. We refer to $P$ as a \emph{control authority degradation map} (CDM), whereas $R$ is referred to as a \emph{control effectiveness degradation map} (CEM). 
Our focus in this work is on CDMs; a number of common CDMs are illustrated in Fig.~\ref{fig:cdm intro}.
A CDM $P$ effectively acts as a control input remapping, and can be thought of in the context of control systems with delegated control allocation, e.g., when an actuator with internal dynamics takes $u(t)$ and remaps it based on its internal state. Such a setting includes common degradation modes such as deadband or saturation, or any other nonlinear transformation due to effects such as friction. In more extreme cases, it is possible that $P$ maps a control signal $u_i (t)$ to another control signal $u_j (t)$ due to incorrect wiring or software design. The types of control authority degradation maps that we allow for in this work go beyond linear maps applied to discrete-time finite-dimensional linear systems, which hitherto been the main focus in prior work. We develop an \emph{efficient passive algorithm for detection and identification of CDMs}, with the quality of the reconstructed CDM monotonically increasing with system run time. Using this reconstruction of the CDM, we develop a \emph{novel method for viabilizing control signals}, with tight approximation error bounds that decrease with system run time.

We note that we do not consider external disturbances or other unmodeled dynamics in this work; robustness results regarding the effects of disturbances will be the subject of future work.
The results of this work allow for \emph{guaranteed approximation of arbitrary control degradation maps} without the need for knowledge of possible degradation modes or handcrafted filters, addressing an open problem in the literature The natural next step of this work, outside of the scope of this letter, is to approximate unviable control signal with their closest viable counterpart, with robustness bounds on the maximum trajectory deviation.


\begin{figure}[t]
    \includegraphics[width=\linewidth]{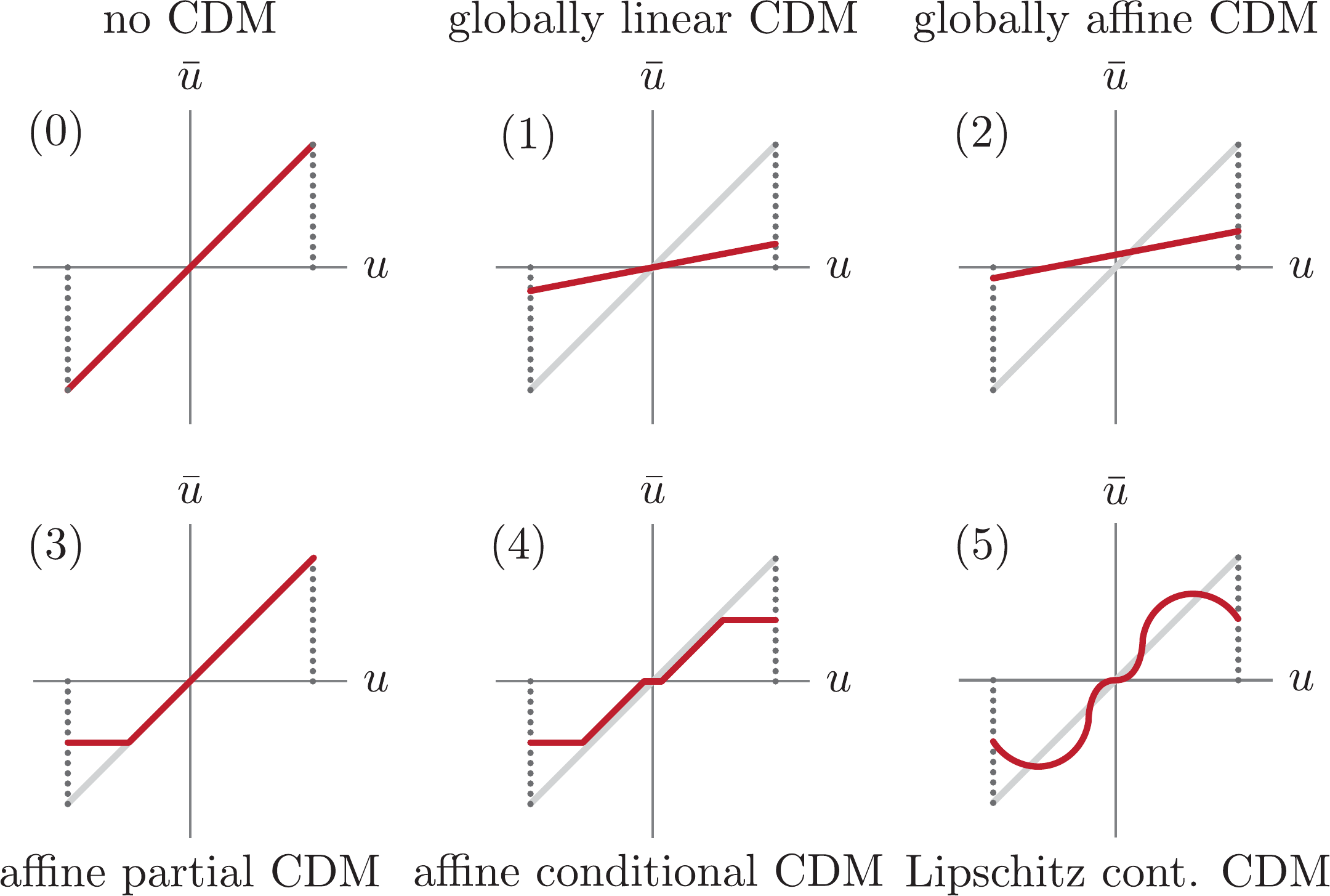}
    \caption{Comparison between various classes of control authority degradation maps.}
    \label{fig:cdm intro}
    \vspace{-0.5cm}
\end{figure}


\section{Preliminaries}\label{sec:preliminaries}

\subsection{Notation}

We use $\Vert \cdot \Vert$ to denote the Euclidean norm. Given two sets $A, B \subseteq \mathbb{R}^n$, we denote by $A + B$ their Minkowski sum $\{a + b : a \in A, b \in B\}$; the Minkowski difference is defined similarly. By $2^A$ we refer to the power set of $A$, i.e., the family of all subsets of $A$. We denote a closed ball centered around the origin with radius $r > 0$ as $\mathcal{B}_r$. By $\mathcal{B}(x, r)$ we denote $\{x\} + \mathcal{B}_r$. We denote by $\mathcal{L}(A, B)$ the set of bounded linear operators, and by $\mathcal{C}(A, B)$ the set of closed linear operators between $A$ and $B$. We define $\mathbb{R}_+ := [0, \infty)$.  For two points in a Banach space $\mathscr{B} \ni a, b$, let $[a, b]$ denote the convex hull of $a$ and $b$, i.e., $[a, b] := \mathrm{conv}\{a, b\}$. Given a point $x \in S$ and a set $A \subseteq S$, we denote $d(x, A) := \inf_{y \in A} d(x, y)$. We define the distance between two sets $A, B \subseteq \mathbb{R}^n$ to be
\begin{equation}\label{eq:distance between sets}
    d(A, B) := \sup_{a \in A} \inf_{b \in B} \Vert a - b \Vert.
\end{equation}
We denote the Hausdorff distance as
\begin{equation}\label{eq:Hausdorff distance}
    d_{\mathrm{H}} (A, B) := \max\{d(A, B), d(B, A)\},    
\end{equation}
An alternative characterization of the Hausdorff distance reads:
\begin{equation}\label{eq:Hausdorff characterization}
    d_{\mathrm{H}} (A, B) = \inf \{ \rho \geq 0 : A \subseteq B_{+\rho}, B \subseteq A_{+\rho} \},
\end{equation}
where $X_{+\rho}$ denotes the \emph{$\rho$-fattening} of $X$, i.e., $X_{+\rho} := \bigcup_{x \in X} \{ y \in \mathbb{R}^n : \Vert x - y \Vert \leq \rho \}$.

We denote by $\partial A$ the boundary of $A$ in the topology induced by the Euclidean norm. For a function $g : A \to B$, we denote by $g^{-1}$ the inverse of this function if an inverse exists and otherwise denoting the preimage. By $\mathrm{dom}(g)$ we refer to the domain of the function (in this case $A$). We denote by $g^\dagger$ the Moore--Penrose pseudo-inverse of a linear function $g$. We use the Iverson bracket notation $\llbracket \cdot \rrbracket$, where the value is $1$ if the expression between the brackets is true, and $0$ otherwise.

In this work, we shall consider star-shaped sets, which are defined as follows:
\begin{definition}[Star-shaped Set and MGFs]
    We call a closed compact set $K \subseteq \mathscr{B}$ \emph{star-shaped} if there exist (i) $\varsigma \in K$, and (ii) a unique function $\varrho : \mathcal{B}_1 \to \mathbb{R}_+$, such that:
    $
        K = \bigcup_{l \in \mathcal{B}_1} [\varsigma, \varsigma + \varrho(l) l]
    $
     where $\mathcal{B}_1$ denotes the unit ball in $\mathscr{B}$. We call $\varrho$ a \emph{Minkowski gauge function} (MGF), and $\varsigma$ the \emph{star center}.
\end{definition}

\subsection{Problem Formulation}

Consider a known nonlinear control-affine system of the form of
\begin{equation}\label{eq:nom sys}
    \dot{x}(t) = f(x(t)) + g(x(t)) u(t),
\end{equation}
where $x \in X$, $u \in U \subseteq \mathscr{U}$, $X$ and $\mathscr{U}$ are Hilbert spaces, and $f : X \to X$ and $g : X \to \mathcal{L}(\mathscr{U}, X)$. In this work, we assume $\mathscr{U} = \mathbb{R}^m$. In addition, we assume that $U$ is a \emph{star-shaped} subset of $\mathbb{R}^m$ such that $\mathrm{span} \ U = \mathbb{R}^m$. Finally, we assume that the full-state of the degraded system,
\begin{equation}\label{eq:degraded sys}
	\dot{\bar{x}}(t) = f(\bar{x}(t)) + R g(\bar{x}(t)) P u(t),
\end{equation}
is known without error.

In system \eqref{eq:degraded sys}, a control action degradation map $R$ can model changes in the control allocation function $g$, which may include actuator reconfiguration, such as a change in the trim angle on aircraft control surfaces, or misalignment of actuators due to manufacturing imperfections or wear and tear. Since $R$ acts after $g$, it does not directly remap the control signal $u(t)$, but it changes the action of a control input on the system; we therefore talk about control effectiveness, as opposed to control authority in the case of $P$, which acts before $g$. Changes in the drift dynamics $f(x(t))$ will not be treated in this work.

In addition to identifying or approximating CDM $P$, we are interested in `undoing' the effects of control authority degradation as much as possible. In particular, we are interested in the set of control signals \eqref{eq:nom sys} that can still be replicated in \eqref{eq:degraded sys} when the CDM is acting; we call this the set of \emph{viable control inputs}, $U_{\mathrm{v}}$. With knowledge of $P$, we  develop in this work a method to obtain, for $u_{\mathrm{cmd}} \in U_{\mathrm{v}}$, $u_{\mathrm{v}}$ such that $P u_{\mathrm{v}} = u_{\mathrm{cmd}}$; here, $u_{\mathrm{cmd}}$ and $u_{\mathrm{v}}$ are called commanded and viabilized control inputs, respectively. This approach is closely related to a technique known in the literature as \emph{fault hiding} \cite{Richter2011}. Fault hiding is achieved by introducing an output observer based on the output of the degraded system, and augmenting the nominal system model by introducing so-called virtual actuators, which requires a nonlinear reconfiguration block that is strongly dependent on the underlying problem structure and failure modes \cite[\S3.6, p.~42]{Richter2011}. In the setting considered in this work, we show that we can adopt the fault hiding philosophy under much less stringent constraints for a general class of systems and degradation modes.



In this work, we are interested in modeling unknown degraded system dynamics \eqref{eq:degraded sys} for a time-invariant control authority degradation map (CDM) $P : U \to \bar{U}$, and no control effectiveness degradation (i.e., $R = I$). This amounts to reconstructing, or \emph{identifying}, $P$:

\begin{problem}[Identifiability of Control Authority Degradation Maps]\label{prob:identifiability}
	For a class of time-invariant CDMs $\mathcal{P} \in P$,	if possible, identify $P$ based on a finite number of full state, velocity, and control input observations ($\bar{x}(t)$, $\dot{\bar{x}}(t)$, $u(t)$) of the degraded system.
\end{problem}

Ideally, we would like to identify general nonlinear CDMs with known bounds on the approximation error. 
We illustrate the control authority degradation modes that are covered in this work in Fig.~\ref{fig:cdm intro}.

We now proceed by solving Problem~\ref{prob:identifiability} for an unknown multi-mode affine CDMs, which allows for approximating Lipschitz continuous nonlinear CDMs with bounded error.

\section{Identifiability of Control Authority Degradation Maps}\label{sec:identifiability}

We now consider Problem~\ref{prob:identifiability}. Let us assume that for $U$, the Minkowski gauge function $\varrho$ is known. Let $P : U \to \bar{U}$ be an unknown control authority degradation map (CDM). We assume that $\bar{U}$ is also a star-shaped set, providing conditions on $P$ and $U$ under which this holds. It bears mentioning that star-shaped sets are more general than convex sets; most results presented in this work will apply to star-shaped sets, which include polytopes, polynomial zonotopes, and ellipsoids. 

Before we provide any results on the identifiability of control authority degradation modes, we pose the following key assumption on the nominal system dynamics \eqref{eq:nom sys}. We allow for an \emph{infinite-dimensional} state-space $X$, that is to say, $X$ is a set of \emph{functions}, but $X = \mathbb{R}^n$ is also captured:

\begin{assump}\label{assump:inf-dim affine control degradation}
For system \eqref{eq:degraded sys}, assume that 
\begin{enumerate}[label=\roman*.]
    \item $g(x)$ has closed range for all $x \in X$;
    \item $g(x)$ is injective for all $x \in X$, i.e., $\mathrm{ker}(g(x)) = \{0\}$;
    \item $\dot{x}$ is known at some $x \in X$ with $u = 0$.
\end{enumerate}
\end{assump}

\begin{remark}\label{remark:fin-dim affine control degradation}
    In the case of finite-dimensional systems, i.e., $X \subseteq \mathbb{R}^n$, the first two conditions of Assumption~\ref{assump:inf-dim affine control degradation} can be stated as:
    \begin{enumerate}[label=\roman*.]
        \item The system is not overactuated, i.e., $m \leq n$;
        \item $g(x)$ is of full-column rank for all $x \in X$.
    \end{enumerate}
\end{remark}

We shall consider the case of multiple control degradation modes acting throughout the space $U$. The simplest of the so-called \emph{conditional control authority degradation modes} (c-CDMs) acts only on a compact subset of $U$; we refer to these c-CDMs as \emph{partial control authority degradation modes} (p-CDMs). Consider two compact star-shaped sets $\check{U}, \hat{U} \subseteq U$, and two p-CDMs
\begin{equation}
    P_{\check{U}}(u) := u + \llbracket u \in \check{U} \rrbracket (P - I) u,
\end{equation}
\begin{equation}
    P_{\hat{U}}(u) := u + \llbracket u \not\in \hat{U} \rrbracket (P - I) u,
\end{equation}
for some control degradation map $P$. Here, $P_{\check{U}}$ is an \emph{internally acting} partial CDM (i.e., acting inside $\check{U}$), whereas $P_{\hat{U}}$ is an \emph{externally acting} partial CDM (acting outside $\hat{U}$); when this distinction is immaterial, we use a combined hat and check symbol (e.g., $\hatcheck{U}$), where $\hatcheck{U}$ is simply called the \emph{affected set} of control inputs.

In reconstructing an $N$-mode c-CDM, we face the problem of discerning which control inputs belong to which conditional degradation mode. To make this problem tractable, we pose the following assumption:

\begin{assump}\label{assump:N-mode c-CDM}
Let the internally acting $N$-mode c-CDM satisfy	 the following properties:
\begin{enumerate}[label=\roman*.]
	\item The number of modes $N$ is known;
	\item $\check{\mathcal{U}}$ is a family of convex sets;
	\item $\check{\mathcal{P}}$ is a family of affine maps denoted by $Q_i = p_i + P_i$.
	\item There exists a known $\delta > 0$, such that for all $i \neq j$, $d_{\mathrm{H}} \left( (\check{U}_i, \check{P}_i \check{U}_i), (\check{U}_j, \check{P}_j \check{U}_j) \right) \geq \delta$.
\end{enumerate}
\end{assump}

We are also interested in obtaining outer-approximations of $\check{U}$ and inner-approximations of $\hat{U}$ for each degradation mode, as illustrated in Fig.~\ref{fig:mgf comp}, so that we can restrict control inputs to regions that are guaranteed to be unaffected. Since we only have access to a finite number of control input samples, we pose the following assumption regarding the regularity of the MGF associated with $P \hatcheck{U}$.

\begin{assump}\label{assump:lip MGF}
	Assume that $\hatcheck{U}$ has star center $\hatcheck{\varsigma} = 0$, and assume that the MGF $\hatcheck{\varrho}$ associated with $\hatcheck{U}$ is Lipschitz continuous, i.e., there exists a known $\hatcheck{L}$ such that
	$
		|\hatcheck{\varrho}(l) - \hatcheck{\varrho}(l')| \leq \hatcheck{L} \Vert l - l' \Vert,
	$
	for all $l, l' \in \mathcal{B}_1$.
\end{assump}

We now proceed to show that Assumption~\ref{assump:lip MGF} holds for the image of Lipschitz star-shaped sets under affine maps.

\begin{lemma}
	Given a star-shaped set $U$ characterized by a Lipschitz MGF $\varrho$ and star center $\varsigma$, the range of $U$ under an affine map $Q u := p + P u$ is also a star-shaped set with Lipschitz MGF.
\end{lemma}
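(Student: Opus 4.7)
The plan is to exploit the affineness of $Q$ to reduce the problem to analyzing the linear image $P(U - \varsigma)$, then construct the MGF of $Q(U)$ explicitly as a pullback of $\varrho$ through $P$, and finally establish Lipschitz continuity via composition of Lipschitz maps.

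The first step is to verify that $Q(U)$ is star-shaped about $\varsigma' := Q\varsigma = p + P\varsigma$: for any $y = Q u \in Q(U)$, affineness of $Q$ gives $[\varsigma', y] = Q([\varsigma, u]) \subseteq Q(U)$, since $[\varsigma, u] \subseteq U$ by star-shapedness of $U$. Translating by $-\varsigma'$, it suffices to analyze the recentered set $V := Q(U) - \varsigma' = P(U - \varsigma)$, which is star-shaped about $0$. Assuming $P$ is invertible, for each unit direction $l \in \partial \mathcal{B}_1$ the condition $r l \in V$ is equivalent to $r P^{-1} l \in U - \varsigma$. Using the MGF $\varrho$ of $U - \varsigma$ about $0$ to characterize membership along the ray through $P^{-1} l$, one obtains the explicit formula
$$\varrho'(l) = \frac{\varrho\bigl(P^{-1} l / \Vert P^{-1} l \Vert\bigr)}{\Vert P^{-1} l \Vert},$$
which witnesses $V$ as a star-shaped set with MGF $\varrho'$.

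For Lipschitz continuity of $\varrho'$, I would chain together: (i) $P^{-1}$ is a bounded linear operator, hence Lipschitz with constant $\Vert P^{-1} \Vert$; (ii) the normalization $v \mapsto v / \Vert v \Vert$ is Lipschitz on the compact set $P^{-1}(\partial \mathcal{B}_1)$, which is bounded away from $0$ since $\Vert P^{-1} l \Vert \geq 1/\Vert P \Vert$ for $l \in \partial \mathcal{B}_1$; (iii) $\varrho$ is Lipschitz by hypothesis, so $\varrho \circ (\cdot/\Vert\cdot\Vert) \circ P^{-1}$ is Lipschitz; and (iv) $l \mapsto 1/\Vert P^{-1} l \Vert$ is Lipschitz and bounded between $1/\Vert P \Vert$ and $\Vert P^{-1} \Vert$. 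The product of two bounded Lipschitz functions is Lipschitz, yielding an explicit Lipschitz constant for $\varrho'$ in terms of $L$, $\Vert P \Vert$, and $\Vert P^{-1} \Vert$.

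The main obstacle is the case when $P$ is not invertible: then $V$ lies in a proper subspace of $\mathscr{U}$, so $\varrho'(l) = 0$ whenever $l \in \partial \mathcal{B}_1$ is transverse to $\mathrm{range}(P)$, and $\varrho'$ will in general be discontinuous across $\partial \mathcal{B}_1 \cap \partial \mathrm{range}(P)$. Replacing $P^{-1}$ by the Moore--Penrose pseudoinverse $P^{\dagger}$ salvages the formula on $\mathrm{range}(P)$ but does not restore Lipschitz continuity on the ambient unit ball; consequently the lemma requires $P$ invertible, which is the natural regime for control authority degradation, with non-injective components of $P$ absorbed into the control effectiveness map $R$.
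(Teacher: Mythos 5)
The paper states this lemma but never proves it --- the source contains no proof environment after the statement, so there is no argument of the authors' to compare yours against. Judged on its own merits, your proof is correct in the regime it covers: the reduction to $V = P(U-\varsigma)$, the observation that affine maps carry the segments $[\varsigma,u]$ to segments $[\varsigma',Qu]$, the explicit radial formula $\varrho'(l) = \varrho\bigl(P^{-1}l/\Vert P^{-1}l\Vert\bigr)/\Vert P^{-1}l\Vert$, and the Lipschitz chain (using $\Vert P^{-1}l\Vert \ge 1/\Vert P\Vert$ to keep the normalization map away from its singularity) are all sound and yield an explicit Lipschitz constant in terms of $L$, $\Vert P\Vert$, $\Vert P^{-1}\Vert$, and $\sup\varrho$.

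Your closing observation is the most valuable part and deserves emphasis rather than apology: the lemma as stated does not assume $P$ invertible, and your rank-one projection counterexample (where $\varrho'$ jumps from $0$ off $\mathrm{range}(P)$ to a positive value on it) shows the conclusion genuinely fails for singular, nonzero $P$ when the MGF is taken over the ambient unit ball. This matters for the paper, which elsewhere explicitly allows degenerate modes such as $Q_{\check{\mathcal{U}}}\check{U} = p$; that extreme case ($P=0$) is harmless since $\varrho'\equiv 0$, but intermediate-rank degradations are not. Two remarks. First, the failure can be repaired by reading ``star-shaped set with Lipschitz MGF'' relative to the subspace $\varsigma' + \mathrm{range}(P)$ rather than the ambient space; on $\mathrm{range}(P)$ one replaces $P^{-1}$ with a bounded right inverse (or argues via the injective restriction of $P$ to a complement of $\ker P$), and a version of your argument goes through, though the radial function becomes a supremum over the fiber $P^{-1}(rl)\cap(U-\varsigma)$ and the Lipschitz estimate requires a little more care than a straight composition. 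Second, a cosmetic point: the paper defines the MGF on the closed unit ball $\mathcal{B}_1$ rather than the unit sphere; your estimates live on the sphere, which is the sensible reading, but you should say explicitly that $\varrho'$ is extended radially (it is constant along rays), since that is how the paper's definition uses it.
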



We can now pose a key result on the guaranteed approximation of Lipschitz MGFs from a finite set of samples.

\begin{proposition}\label{prop:approx MGF}
	Assume that Assumption~\ref{assump:lip MGF} holds for the unknown MGFs $\check{\varrho}$ and $\hat{\varrho}$. Then, for some given $\check{u}, \check{u}' \in \check{U}$ and $\hat{u}, \hat{u}' \not\in \hat{U}$, we have for all $\mu \in [0, 1]$:
	\begin{equation}\label{eq:MGF outer}
	\begin{split}
		&\check{\varrho} \left( \frac{\mu \check{l} + (1-\mu) \check{l}'}{\Vert \mu \check{l} + (1-\mu) \check{l}' \Vert} \right) \leq \\
		&\min \left\{\Vert \check{u} \Vert + (1-\mu) \check{L} \Vert l - l' \Vert, \Vert \check{u}' \Vert + \mu \check{L} \Vert \check{l} - \check{l}' \Vert \right\},
	\end{split}
	\end{equation}
	and
	\begin{equation}\label{eq:MGF inner}
	\begin{split}
		&\hat{\varrho} \left( \frac{\mu \hat{l} + (1-\mu) \hat{l}'}{\Vert \mu \hat{l} + (1-\mu) \hat{l}' \Vert} \right) \geq \\
		&\max \left\{ 0, \Vert \hat{u} \Vert - (1-\mu) \hat{L} \Vert \hat{l} - \hat{l}' \Vert, \Vert \hat{u}' \Vert - \mu \hat{L} \Vert \hat{l} - \hat{l}' \Vert \right\},
	\end{split}
	\end{equation}
	where $\hatcheck{l} := \hatcheck{u}/\Vert \hatcheck{u} \Vert$ and $\hatcheck{l}' := \hatcheck{u}'/\Vert \hatcheck{u}' \Vert$.
\end{proposition}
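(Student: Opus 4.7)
The plan is to apply the Lipschitz hypothesis on $\check{\varrho}$ (respectively $\hat{\varrho}$) at a carefully chosen non-normalized convex combination of the observed direction vectors, and then transfer the resulting estimate to $\check{\varrho}(l^*)$ by invoking the radial invariance of Minkowski gauge functions on star-shaped sets. Concretely, I introduce $v := \mu \check{l} + (1-\mu) \check{l}'$; since $\check{l}$ and $\check{l}'$ lie on the unit sphere and $\mu \in [0,1]$, the point $v$ belongs to $\mathcal{B}_1$ (so Assumption~\ref{assump:lip MGF} is applicable), and its radial normalization $v/\Vert v \Vert$ is precisely the argument $l^*$ appearing in the statement. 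A direct computation using the convex-combination structure yields $v - \check{l} = (1-\mu)(\check{l}' - \check{l})$ and $v - \check{l}' = \mu(\check{l} - \check{l}')$, so that $\Vert v - \check{l} \Vert = (1-\mu) \Vert \check{l} - \check{l}' \Vert$ and $\Vert v - \check{l}' \Vert = \mu \Vert \check{l} - \check{l}' \Vert$, which already foreshadow the $(1-\mu)$ and $\mu$ factors that appear in the claimed bounds.

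For \eqref{eq:MGF outer}, I would apply Lipschitz continuity to $v$, anchored at $\check{l}$ and at $\check{l}'$ in turn, obtaining $\check{\varrho}(v) \leq \check{\varrho}(\check{l}) + \check{L}(1-\mu) \Vert \check{l} - \check{l}' \Vert$ and $\check{\varrho}(v) \leq \check{\varrho}(\check{l}') + \check{L}\mu \Vert \check{l} - \check{l}' \Vert$. Combining with the radial identity $\check{\varrho}(v) = \check{\varrho}(l^*)$, together with the equality $\check{\varrho}(\check{l}) = \Vert \check{u} \Vert$ (and similarly $\check{\varrho}(\check{l}') = \Vert \check{u}' \Vert$) obtained by viewing each sample as a boundary observation along its ray, produces two upper bounds whose minimum is exactly the right-hand side of \eqref{eq:MGF outer}. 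The lower bound \eqref{eq:MGF inner} follows by a mirror-image argument: Lipschitz continuity in the reversed direction gives $\hat{\varrho}(v) \geq \hat{\varrho}(\hat{l}) - \hat{L}(1-\mu) \Vert \hat{l} - \hat{l}' \Vert$ and its counterpart anchored at $\hat{l}'$, and the radial identity combined with $\hat{\varrho}(\hat{l}) = \Vert \hat{u} \Vert$ closes out the two candidate lower bounds; the $\max$ with $0$ is then forced by non-negativity of the MGF.

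The main obstacle is justifying the transfer from the Lipschitz estimate at the non-normalized point $v$ to $\check{\varrho}(l^*)$. This rests on $\check{\varrho}$ being radially constant on rays emanating from the star center, which is consistent with the union-of-rays definition of a star-shaped set given in Section~\ref{sec:preliminaries}, but deserves an explicit verification as part of the proof. If one instead tried to apply Lipschitz continuity directly to $l^*$, one would be forced to estimate $\Vert l^* - \check{l} \Vert$; a standard chord-arc comparison on the unit sphere shows that $\Vert l^* - \check{l} \Vert > (1-\mu) \Vert \check{l} - \check{l}' \Vert$ in general, so that naive route would fail to deliver the claimed tight factor. A closely related subtlety, which I expect to handle by the boundary-sample interpretation indicated above, is aligning the abstract sample-containment statements with the direction of the inequalities, so that the $\Vert \check{u} \Vert$ and $\Vert \hat{u} \Vert$ terms play their roles as sharp estimates of $\check{\varrho}(\check{l})$ and $\hat{\varrho}(\hat{l})$ rather than as one-sided inequalities alone.
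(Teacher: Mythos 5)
Your proposal follows essentially the same route as the paper's proof, which is a one-sentence appeal to Lipschitz continuity of $\hatcheck{\varrho}$ and non-negativity of the MGF; your detailed expansion is correct and is in fact more careful than the original, since the factors $(1-\mu)\Vert \check{l}-\check{l}'\Vert$ and $\mu\Vert\check{l}-\check{l}'\Vert$ only come out right via exactly the unnormalized-point-plus-radial-transfer device you describe (applying the Lipschitz bound directly between the unit vectors $l^*$ and $\check{l}$ would, as you note, give a larger constant). The two subtleties you flag are genuine and not pedantic: radial constancy of $\varrho$ along rays is what licenses $\check{\varrho}(v)=\check{\varrho}(l^*)$ under the paper's ball-domain definition of the MGF, and the stated memberships $\check{u}\in\check{U}$, $\hat{u}\notin\hat{U}$ by themselves only yield $\Vert\check{u}\Vert\leq\check{\varrho}(\check{l})$ and $\Vert\hat{u}\Vert\geq\hat{\varrho}(\hat{l})$ --- inequalities pointing the wrong way for \eqref{eq:MGF outer} and \eqref{eq:MGF inner} respectively --- so your boundary-sample reading $\Vert\check{u}\Vert=\check{\varrho}(\check{l})$, $\Vert\hat{u}\Vert=\hat{\varrho}(\hat{l})$ is not an optional convenience but the only interpretation under which the claimed bounds actually hold.
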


\begin{proof}
	This result follows directly from non-negativity of the MGF and the mean value theorem, given the Lipschitz continuity of $\hatcheck{\varrho}$ as assumed in Assumption~\ref{assump:lip MGF}.
\end{proof}


The results given in Proposition~\ref{prop:approx MGF} allow for direct inner-approximation of $\hat{U}$ and outer-approximation of $\check{U}$; these results will allow us to restrict closed-loop control inputs to a subset of $U$ that is \emph{guaranteed to be unaffected by $P$} as illustrated in provided in Fig.~\ref{fig:mgf comp}. The method for approximating $\hatcheck{U}$ will be rigorized in the next theorem.

\begin{figure}[t]
	\centering
    \includegraphics[width=0.75\linewidth]{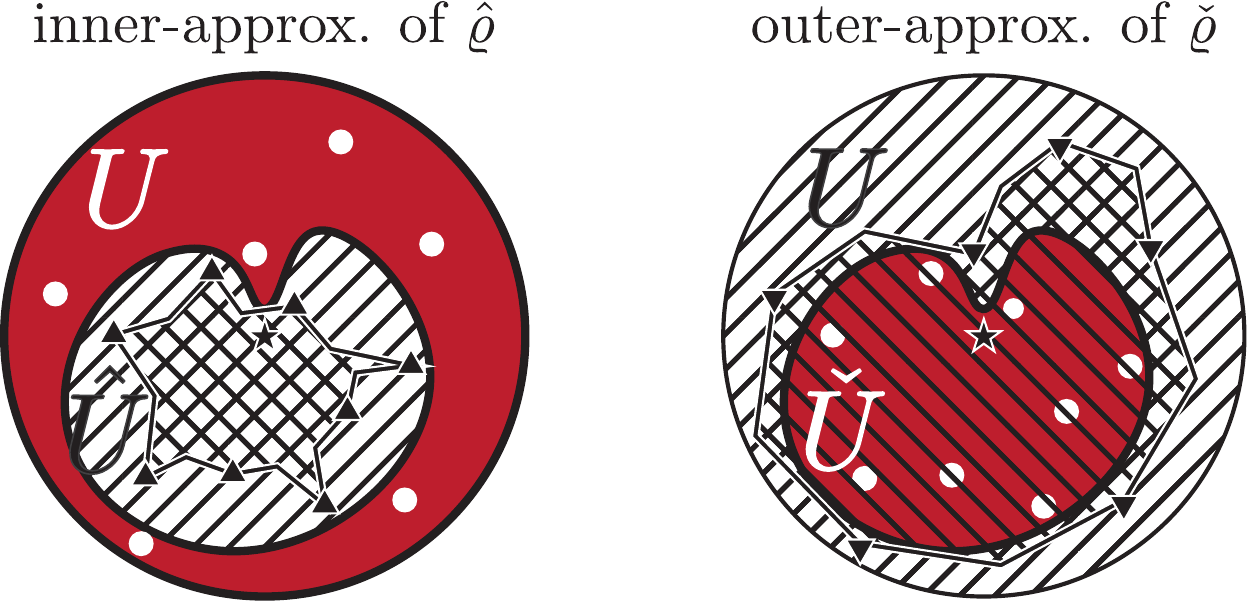}
    \caption{Comparison between inner- and outer-approximations of $\hat{U}$ and $\check{U}$ respectively, based on Proposition~\ref{prop:approx MGF} and Theorem~\ref{thm:N-mode fin-dim aff c-CDM} for a 1-mode c-CDM. The region with top-right-pointing hatching indicates the set in which the control input is unaffected; the red-colored region indicates the affected set. The respective approximations of $\protect\hatcheck{U}$ allow one to find regions in which control inputs are guaranteed to be unaffected. In the left image, the set indicated by top-left-pointing hatching is an inner-approximation of $\hat{U}$, and in the right image this set is an outer-approximation of $\check{U}$.}
    \label{fig:mgf comp}
    \vspace{-0.5cm}
\end{figure}

We now pose the main result on the identifiability of $N$-mode conditional control authority degradation modes (c-CDMs), where multiple affine CDMs act on disjoint subsets of $U$; this will allow us to approximate of Lipschitz continuous CDMs as shown at the end of the next section.

\begin{theorem}[Reconstructing $N$-mode Affine c-CDMs]\label{thm:N-mode fin-dim aff c-CDM}
    Consider system \eqref{eq:degraded sys} and Assumptions~\ref{assump:inf-dim affine control degradation}--\ref{assump:N-mode c-CDM}. Assume that the c-CDM is represented by $N$ unknown internally acting \emph{affine} maps $Q_i$, each acting on mutually disjoint unknown star-shaped sets $\check{U}_i \subseteq U$, giving $Q_{\check{\mathcal{U}}}$ as the p-CDM. Let there be a given array of distinct state--input pairs $[(\bar{x}[i], u[i])]_{i=1}^N$, and a corresponding array of degraded velocities $[\dot{\bar{x}}[i]]_{i=1}^{N'}$ obtained from system \eqref{eq:degraded sys}, with $N' \geq N(m+1)$. Let there also be a given array of undegraded state--input pairs $[(x_*[i], u_*[i])]_{i=1}^M$, with $M \geq m$. Assume that there exist $m$ state--input pairs indexed by $\jmath$ and $\jmath_*$, such that the arrays of input vectors $\{u[\jmath_j]\}_{j=1}^m$ and $[u_*[\jmath_{*, j}]]_{j=1}^m$ are linearly independent. 
    
    Cluster the array $[(u[i], \hatcheck{u}[i])]_{i=1}^{N'}$ into $N$ clusters with a Hausdorff distance of at least $\delta$ between each pair of clusters. If each cluster $i$ contains at least $m$ vectors $u[i]$ that are linearly independent, then
	$Q_{\check{\mathcal{U}}}$ can be approximated as follows:
    \begin{equation}\label{eq:guaranteed approx of N-mode c-CDM}
    	\tilde{Q}_{\check{\mathcal{U}}} u = \begin{cases}
    		u & u \not\in \check{U}_{\mathrm{outer}}, \\
    		\sum_{i=1}^N \llbracket u \in \check{U}_{i, \mathrm{inner}} \rrbracket Q_i u \quad & u \in \check{U}_{\mathrm{inner}}, \\
    		\mathrm{inconclusive} & u \in \check{U}_{\mathrm{outer}} \setminus \check{U}_{\mathrm{inner}},
    	\end{cases}
    \end{equation}
    where $\check{U}_{\mathrm{inner}} := \bigcup_{i=1}^N \check{U}_{i, \mathrm{inner}}$ and $\check{U}_{\mathrm{outer}} := \bigcup_{i=1}^N \check{U}_{i, \mathrm{outer}}$. Each $Q_i$ is obtained by considering for each cluster $v_i := g^\dagger(\bar{x}[j])(\dot{\bar{x}}[j] - f(\bar{x}[j]))$ where index $j$ is not part of the array of linearly independent inputs indexed by $\jmath$, $\vec{u} := [\begin{smallmatrix} u[\jmath_1] - v_i & \cdots & u[\jmath_m] - v_i \end{smallmatrix}]$ and
    $
        \Delta \vec{u} := \left[ g^\dagger(\bar{x}[\jmath_j]) (\dot{\bar{x}}[\jmath_j] - f(\bar{x}[\jmath_j])) - u[\jmath_j] - v_i \right]_{j=1}^m.
    $
    Linear operator $P_i$ is obtained as
    \begin{equation}\label{eq:lin CDM}
        P_i = (\vec{u} + \Delta \vec{u}) \vec{u}\transp (\vec{u} \vec{u}\transp)\inv.
    \end{equation}
    The translation $p_i$ is obtained as $p_i = v_j - P_i u[j]$, which yields the $i$'th mode affine CDM $Q_i$:
    \begin{equation}\label{eq:aff CDM}
		Q_i u := p_i + P_i u.
	\end{equation}
	Here, each affected set is approximated as follows: In case $\hatcheck{U}_i$ is internally acting (i.e., $\check{U}_i = \hatcheck{U}_i$), \eqref{eq:MGF outer} yields an outer-approximation to $\check{\varrho}_i$ by taking a convex combination of the $m$ basis vectors $\{\hatcheck{l}_*[\jmath_{*, j}] = u_*[\jmath_{*, j}]/\Vert u_*[\jmath_{*, j}] \Vert \}_{j=1}^m$ and their values. Similarly, for externally acting $\hatcheck{U}_i$ (i.e., $\hat{U}_i = \hatcheck{U}_i$), \eqref{eq:MGF inner} yields an inner-approximation to $\hat{\varrho}_i$ using $m$ basis vectors $\{\hatcheck{l}[\jmath_{j}]\}_{j=1}^m$. Inner- and outer-approximations satisfy the relation $\hatcheck{U}_{i, \mathrm{inner}} \subseteq \hatcheck{U}_i \subseteq \hatcheck{U}_{i, \mathrm{outer}}$ (cf. Fig.~\ref{fig:mgf comp}).
\end{theorem}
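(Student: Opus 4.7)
The plan is to decompose the verification of \eqref{eq:guaranteed approx of N-mode c-CDM} into three logically independent pieces: (a) exact pointwise recovery of the degraded input $\hatcheck{u}[i] := Q_{\check{\mathcal{U}}} u[i]$ from each observation, (b) consistency of any $\delta$-Hausdorff clustering with the true mode partition, and (c) per-cluster affine fitting together with an invocation of Proposition~\ref{prop:approx MGF} for the affected-set approximations.

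First I would handle (a) by invoking Assumption~\ref{assump:inf-dim affine control degradation}: injectivity and closed range of $g(\bar{x})$ imply $g^\dagger(\bar{x})$ is a bounded left-inverse, so applying it to $\dot{\bar{x}}[i] - f(\bar{x}[i]) = g(\bar{x}[i]) \hatcheck{u}[i]$ recovers $\hatcheck{u}[i]$ exactly, giving $N'$ noise-free observation pairs $(u[i], \hatcheck{u}[i])$. In particular, $v_i = Q_i u[j]$ for the chosen reference index $j$ of the cluster. Next, for (b), I would use condition~iv of Assumption~\ref{assump:N-mode c-CDM}, which enforces $\delta$-separation between the graphs $(\check{U}_k, Q_k \check{U}_k)$ in the Hausdorff metric; because every observed pair lies on the graph of its true mode, any clustering that achieves $\delta$ pairwise Hausdorff separation necessarily coincides with the true partition, since any cross-mode misallocation would place two pairs whose graphs are at least $\delta$ apart into a common cluster, contradicting the separation.

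For (c), within a cluster the $m$ linearly independent inputs produce $m$ identities $\hatcheck{u}[\jmath_k] = p_i + P_i u[\jmath_k]$. Eliminating $p_i$ against the reference identity $v_i = p_i + P_i u[j]$ reduces the problem to solving a linear system of the form $P_i \vec{u} = \vec{u} + \Delta \vec{u}$, where $\vec{u}$ is invertible by the linear-independence hypothesis. The Moore--Penrose expression in \eqref{eq:lin CDM} then collapses to the unique exact solution $P_i = (\vec{u} + \Delta \vec{u}) \vec{u}^{-1}$, and the translation is fixed by $p_i = v_i - P_i u[j]$, yielding the affine CDM \eqref{eq:aff CDM}. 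The set-approximation claim $\hatcheck{U}_{i,\mathrm{inner}} \subseteq \hatcheck{U}_i \subseteq \hatcheck{U}_{i,\mathrm{outer}}$ then follows from Proposition~\ref{prop:approx MGF}: using the $m$ unit directions associated with the linearly independent inputs of the appropriate array (undegraded for outer-approximating internally acting sets, degraded for inner-approximating externally acting sets) and evaluating \eqref{eq:MGF outer}--\eqref{eq:MGF inner} over every convex combination of these directions yields the stated bounds on $\hatcheck{\varrho}_i$.

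The step I expect to be the main obstacle is the algebraic bookkeeping in (c): the reference-subtraction convention used in the theorem mixes control inputs with images under $Q_i$ through $v_i$, and confirming that \eqref{eq:lin CDM} recovers the true $P_i$ rather than a biased estimate requires careful tracking of how $v_i$ enters both $\vec{u}$ and $\Delta \vec{u}$, particularly when $u[j] \neq 0$ so that $v_i$ is not purely the translation term. The clustering and MGF-approximation steps, by contrast, are essentially direct applications of Assumption~\ref{assump:N-mode c-CDM} and Proposition~\ref{prop:approx MGF} respectively.
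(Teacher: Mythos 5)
Your proposal follows essentially the same route as the paper's proof: recover the degraded inputs exactly via the left-inverse $g^\dagger(\bar{x})$ (justified by the closed-range and injectivity conditions of Assumption~\ref{assump:inf-dim affine control degradation}), fit one affine map per cluster from $m$ linearly independent inputs plus a reference sample, and obtain the inner/outer approximations of the affected sets by repeated application of Proposition~\ref{prop:approx MGF}. Two differences are worth noting. First, you supply an argument the paper omits entirely: that any clustering achieving pairwise Hausdorff separation $\delta$ must coincide with the true mode partition, via condition~iv of Assumption~\ref{assump:N-mode c-CDM}; the paper's proof never justifies that the clusters correspond to the true modes, so this is a genuine addition. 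Second, where the paper derives \eqref{eq:lin CDM} as the minimizer of $\min_P \Vert P\vec{u} - (\vec{u}+\Delta\vec{u})\Vert^2$ and then asserts it maps $\vec{u}$ to $\vec{u}+\Delta\vec{u}$, you go directly to the exact solve $P_i = (\vec{u}+\Delta\vec{u})\vec{u}^{-1}$; these coincide once $\vec{u}$ is invertible.

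You flag but do not close the one step that actually needs work, and it is the same step the paper glosses over. With the theorem's literal definitions, column $k$ of $\vec{u}$ is $u[\jmath_k]-v_i$ and column $k$ of $\vec{u}+\Delta\vec{u}$ is $Q_i u[\jmath_k]-2v_i$, whereas eliminating $p_i$ from $\hatcheck{u}[\jmath_k]=p_i+P_iu[\jmath_k]$ and $v_i=p_i+P_iu[j]$ gives $\hatcheck{u}[\jmath_k]-v_i = P_i\left(u[\jmath_k]-u[j]\right)$; this is \emph{not} the relation $P_i\vec{u}=\vec{u}+\Delta\vec{u}$ unless the reference subtraction is read as $u[j]$ on the domain side and $v_i$ on the range side. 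Your ``main obstacle'' paragraph identifies exactly this, so the computation must actually be carried through rather than asserted. Relatedly, your claim that ``$\vec{u}$ is invertible by the linear-independence hypothesis'' does not follow: subtracting a common vector from $m$ linearly independent vectors can destroy independence (take $e_1-v$ and $e_2-v$ with $v=(e_1+e_2)/2$), so invertibility of the centered matrix needs a separate justification. The paper's own proof shares both weaknesses, so your proposal is faithful to it, but neither is complete on these points.
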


\begin{proof}
	We first consider a globally acting affine CDM. We obtain the closed-form expression of $P_i$, \eqref{eq:lin CDM}, by solving the quadratic program $\min_{P \in \mathcal{L}(U, \bar{U})} \Vert P \vec{u} - (\vec{u} + \Delta \vec{u}) \Vert^2$,
    which yields a unique linear map $P$ that maps $\vec{u}$ to $\vec{u} + \Delta \vec{u}$ as desired. The translation term $p_i$ can be verified by direct substitution in \eqref{eq:aff CDM}, yielding the affine map $Q_i$.
    
    In \eqref{eq:lin CDM}, since the inverse of $\vec{u}\transp \vec{u}$ must be taken, we require both that $\vec{u}$ is a square matrix, and $\vec{u}\transp \vec{u}$ is invertible. This is achieved by considering $\vec{u} \in \mathbb{R}^{m \times m}$ of full column rank, as guaranteed by the linear independence hypothesis.
    
    Regarding $g^\dagger(x)$, the Moore--Penrose pseudo-inverse is defined for a general Hilbert space $X$, provided that $\mathrm{range}(g(x))$ is closed for all $x \in X$ \cite[\S4.2, p.~47]{Nashed1976}. For $g^\dagger(x)$ to be a left-inverse, a necessary condition is that $g(x)$ be injective, i.e., $\mathrm{ker}(g(x)) = \{0\}$ for all $x \in X$ \cite[Cor.~2.13, p.~36]{Nashed1976}. Finally, the translation term $p$ is accounted for as well \eqref{eq:aff CDM}.
	
	To approximate the $i$'th affected set, $\hatcheck{U}_i$, we require a spanning set of basis vectors that lie within $\hatcheck{U}_i$, as provided for in the hypotheses. The unknown MGF associated with $\hatcheck{U}_i$ can be obtained according to Proposition~\ref{prop:approx MGF} using \eqref{eq:MGF outer}--\eqref{eq:MGF inner}, where an inner-approximation is desired for internally acting p-CDMs, and outer-approximations for externally acting p-CDMs. These approximations are obtained through repeated convex combinations and the corresponding inequality given in \eqref{eq:MGF outer}--\eqref{eq:MGF inner}, for a total of $m$ times; an explicit expansion of the resulting expression is omitted here for the sake of space.
\end{proof}

\begin{remark}
	This result incorporates p-CDMs that map a set $\check{U}$ to a constant, e.g., $Q_{\check{\mathcal{U}}} \check{U} = p$. To highlight the utility of this result, it should be noted that the \emph{hypotheses given here allow for commonly encountered degradation modes such as deadzones and saturation to be modeled} (see Fig.~\ref{fig:cdm intro}(4)). Additionally, Theorem~\ref{thm:N-mode fin-dim aff c-CDM} allows for \emph{discontinuous} control authority degradation modes, a property that is rarely present in prior work.
\end{remark}

We can now consider the case in which $P$ is a Lipschitz continuous CDM. We consider an approximation of $P$ by an $N$-mode affine c-CDM $\tilde{P}$, for which we derive an explicit error bound given that the Lipschitz constant of $P$, $L_P$, is known.

\begin{theorem}[Approximating Lipschitz continuous CDMs by $N$-mode Affine c-CDMs]\label{thm:Lip nonlin CDM}
	Let the hypotheses of Theorem~\ref{thm:N-mode fin-dim aff c-CDM} hold, with the exception that $P := Q_{\check{\mathcal{U}}}$ is now a Lipschitz continuous CDM with Lipschitz constant $L_P$ and Assumption~\ref{assump:N-mode c-CDM} is now dropped. If $N$ clusters that satisfy the linear independence requirements of Theorem~\ref{thm:N-mode fin-dim aff c-CDM} are identified, then the resulting $N$-mode affine c-CDM approximation $\tilde{P}$ has the following error:
	
	For all $u \in \check{U}_{i, \mathrm{inner}}$ and all $i=1, \ldots, N$,
	\begin{equation}\label{eq:nonlinear CDM error bound}
		\Vert P u - \tilde{P} u \Vert \leq \Vert \min_{j=1,\ldots,m} \varepsilon_{i, j} + L_P \Vert u[i,j] - u \Vert,
	\end{equation}
	where $\varepsilon_{i, j} := \Vert P \vec{u}_i[j] - \check{P}_i \vec{u}_i[j] \Vert$, and $u[i, j] := \vec{u}_i [j]$, where $\vec{u}_i$ is an array composed of all control inputs in the $i$'th cluster.
\end{theorem}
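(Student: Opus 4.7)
\medskip\noindent\emph{Proof sketch.} The plan is to fix an arbitrary $u \in \check{U}_{i,\mathrm{inner}}$ for some mode index $i \in \{1,\ldots,N\}$ and bound $\Vert Pu - \tilde{P}u \Vert$ by a triangle-inequality decomposition anchored at each sample $u[i,j]$ belonging to the $i$-th cluster, then take the tightest such bound by minimizing over $j$. By the piecewise rule \eqref{eq:guaranteed approx of N-mode c-CDM}, whenever $u \in \check{U}_{i,\mathrm{inner}} \subseteq \check{U}_i$ one has $\tilde{P}u = Q_i u$, where $Q_i = p_i + P_i$ is the affine surrogate produced by Theorem~\ref{thm:N-mode fin-dim aff c-CDM}. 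This reduces the bound to controlling $\Vert Pu - Q_i u \Vert$ using only information at nearby cluster samples.

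First, I would insert the intermediate points $Pu[i,j]$ and $Q_i u[i,j]$ to obtain
\begin{equation*}
\Vert Pu - \tilde{P}u \Vert \leq \Vert Pu - Pu[i,j] \Vert + \Vert Pu[i,j] - Q_i u[i,j] \Vert + \Vert Q_i u[i,j] - Q_i u \Vert.
\end{equation*}
The Lipschitz hypothesis on $P$ yields $\Vert Pu - Pu[i,j] \Vert \leq L_P \Vert u - u[i,j] \Vert$, and the middle term equals $\varepsilon_{i,j}$ by definition. The third term rewrites as $\Vert P_i(u[i,j] - u) \Vert$, since the affine offset $p_i$ cancels, so it reduces to bounding the operator norm of the linear part $P_i$ obtained from \eqref{eq:lin CDM}.

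Next, I would argue that because $P_i$ is built by interpolating values of the Lipschitz map $P$ at the $m$ linearly independent samples in cluster $i$ (each observed through $g^\dagger$ as in \eqref{eq:lin CDM}), its operator norm is controlled by $L_P$ up to a factor depending on the conditioning of $\vec{u}_i$. Substituting $\Vert P_i \Vert \leq L_P$ into the third term and combining with the first produces the single-sample bound $\Vert Pu - \tilde{P}u \Vert \leq \varepsilon_{i,j} + L_P \Vert u[i,j] - u \Vert$, and taking $\min_{j=1,\ldots,m}$ then recovers \eqref{eq:nonlinear CDM error bound}.

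The main obstacle is exactly this control on $\Vert P_i \Vert$: a pseudo-inverse based least-squares fit of a nonlinear Lipschitz map over finitely many samples can in principle be inflated by poor conditioning of $\vec{u}_i \vec{u}_i\transp$. A rigorous proof must therefore either carry a conditioning-dependent prefactor on $L_P$, or invoke a spreading assumption on the cluster samples that pins $\Vert P_i \Vert$ down to $L_P$ directly via a Lipschitz-interpolation argument. I expect the latter route is the one the authors intend, since the stated bound \eqref{eq:nonlinear CDM error bound} exhibits $L_P$ with no extra multiplicative constant, suggesting that the clustering step of Theorem~\ref{thm:N-mode fin-dim aff c-CDM} is being used implicitly to force the sample geometry into a regime where the interpolant's operator norm is no larger than $L_P$.
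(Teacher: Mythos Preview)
Your approach is essentially the same as the paper's: the paper's proof consists of a single sentence invoking ``the triangle inequality in combination with the Lipschitz continuity of $P$, the properties of the affine maps $\check{P}_i$, and the known samples of $(u, Pu)$,'' which is precisely the three-term decomposition you write out explicitly. In that sense your sketch is a faithful (and more detailed) unpacking of what the paper asserts.

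Two remarks are worth making. First, the obstacle you flag concerning $\Vert P_i\Vert$ is real, and the paper does not resolve it either: the phrase ``properties of the affine maps $\check{P}_i$'' is all that is offered, with no explicit bound on the operator norm of the fitted linear part and no conditioning hypothesis on $\vec{u}_i\vec{u}_i\transp$. So your identification of this as the delicate step is accurate, but you should not expect the paper to supply the missing argument; it simply does not. Second, there is an arithmetic slip in your final combination: if the first term contributes $L_P\Vert u - u[i,j]\Vert$ and the third term, under the assumption $\Vert P_i\Vert \leq L_P$, contributes another $L_P\Vert u - u[i,j]\Vert$, then the aggregate bound is $\varepsilon_{i,j} + 2L_P\Vert u[i,j] - u\Vert$, not $\varepsilon_{i,j} + L_P\Vert u[i,j] - u\Vert$ as stated in \eqref{eq:nonlinear CDM error bound}. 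Matching the paper's coefficient of $L_P$ exactly would require either dropping the third term entirely (which is not justified) or a sharper interpolation argument than the one you outline; the paper does not supply this either.
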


\begin{proof}
	The proof is similar to that of Theorem~\ref{thm:N-mode fin-dim aff c-CDM}, with the error bound \eqref{eq:nonlinear CDM error bound} following an application of the triangle inequality in combination with the Lipschitz continuity of $P$, the properties of the affine maps $\check{P}_i$, and the known samples of $(u, P u)$.
\end{proof}

We can now pose a convergence result on the $N$-mode affine c-CDM approximation $\tilde{P}$ of a Lipschitz continuous CDM $P$.

\begin{corollary}\label{cor:Lip nonlin CDM convergence}
	Error bound \eqref{eq:nonlinear CDM error bound} is monotonically decreasing in the the number of samples $N'$ and the number of c-CDM modes $N$. In the limit of the $N', N \to \infty$, error bound \eqref{eq:nonlinear CDM error bound} converges to zero.
\end{corollary}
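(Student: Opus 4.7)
The plan is to split the right-hand side of \eqref{eq:nonlinear CDM error bound} into its two additive contributions---the affine-fit residual $\varepsilon_{i,j}$ and the Lipschitz transport term $L_P \Vert u[i,j] - u \Vert$---and track how each behaves under two refinements: appending samples to the data array $[(\bar{x}[i], u[i])]$, and splitting an existing cluster into two. I would treat monotonicity and convergence together by arguing that each refinement weakly decreases the bound and that, under sufficiently dense sampling and fine partitioning, both terms tend to zero. Throughout, I would interpret ``monotonically decreasing in $N$'' as meaning that the best achievable $(N+1)$-mode approximation is no worse than the best $N$-mode one, which suffices for the stated convergence claim.

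For monotonicity in $N'$, I first observe that the minimum in \eqref{eq:nonlinear CDM error bound} ranges over all samples inside the cluster containing $u$, so appending a new sample can only enlarge the feasible index set and the minimum cannot increase. For monotonicity in $N$, I would exploit the fact that the clustering procedure in Theorem~\ref{thm:N-mode fin-dim aff c-CDM} only requires Hausdorff separation $\delta$ and linear independence of $m$ inputs per cluster; hence a valid $(N+1)$-mode approximation can be produced by splitting a single cluster while leaving the fits $\check{P}_k$ on unaffected clusters intact. On each refined cluster, I would bound the residual at any sample by combining the exact interpolation property of the pseudoinverse construction \eqref{eq:lin CDM} with the Lipschitz inequality $\Vert P u[i,j] - P u[i, j_0] \Vert \leq L_P \Vert u[i,j] - u[i, j_0] \Vert$ applied to a fit sample $u[i, j_0]$, yielding $\varepsilon_{i,j} \leq (L_P + \Vert P_i \Vert) \, \mathrm{diam}(\check{U}_i)$, which is weakly smaller than its parent-cluster counterpart.

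For convergence, I would combine the two estimates: as $N' \to \infty$ with dense sampling, for every $u \in \check{U}_{i,\mathrm{inner}}$ there exists $u[i,j]$ arbitrarily close to $u$, driving $L_P \Vert u[i,j] - u \Vert \to 0$; as $N \to \infty$ with diameters shrinking, $(L_P + \Vert P_i \Vert)\, \mathrm{diam}(\check{U}_i) \to 0$ uniformly and hence $\varepsilon_{i,j} \to 0$. The main obstacle I anticipate is controlling $\Vert P_i \Vert$, because \eqref{eq:lin CDM} depends on the conditioning of the fit-sample simplex: although each column of $P_i$ along a sample-difference direction is a secant slope of $P$ and hence bounded by $L_P$, the operator norm can be inflated by a poorly conditioned basis $\vec{u}$. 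I would address this by requiring that the chosen $m$ fit samples per cluster remain well-conditioned as clusters shrink (for instance, that the smallest singular value of $\vec{u}$ stays bounded below by a fixed fraction of the cluster diameter), which is a mild genericity condition on the sampling that is compatible with the assumption $\mathrm{span}\, U = \mathbb{R}^m$ made throughout Section~\ref{sec:identifiability}.
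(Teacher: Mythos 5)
Your proposal follows essentially the same route as the paper: decompose the bound into the affine-fit residual $\varepsilon_{i,j}$ and the Lipschitz transport term, and argue that both vanish as cluster diameters shrink under dense sampling (the paper's proof explicitly leans on the shrinking-diameter argument of Lemma~\ref{lm:convergence of approximations of U} together with the Lipschitz continuity of $P$). Where you go beyond the paper is in making the residual estimate quantitative---passing through a fit sample via the triangle inequality to get $\varepsilon_{i,j} \leq (L_P + \Vert P_i \Vert)\,\mathrm{diam}(\check{U}_i)$---and, crucially, in flagging that this only closes if $\Vert P_i \Vert$ stays bounded as clusters shrink. The paper's proof simply asserts that $\Vert P - \check{P}_i\Vert$ restricted to the cluster converges because the diameter does, which tacitly assumes the fit basis $\vec{u}$ in \eqref{eq:lin CDM} remains well-conditioned; your proposed lower bound on the smallest singular value relative to the cluster diameter is exactly the hypothesis needed to make that step rigorous, and is not stated anywhere in the paper. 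Your explicit monotonicity arguments (enlarging the index set of the minimum for $N'$; cluster splitting for $N$) are likewise more careful than the paper's, which treats monotonicity and convergence in one breath. In short: same skeleton, but your version identifies and patches a genuine tacit assumption in the paper's own argument.
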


\begin{proof}
	In \eqref{eq:nonlinear CDM error bound}, $\varepsilon_{i,j}$ monotonically converges to zero, because the operator norm $\Vert P - \check{P}_i \Vert$ restricted to the $i$'th cluster converges monotonically to zero; this fact follows by considering that the diameter of each cluster converges to zero for a greater number of samples and clusters, similarly to the proof of Lemma~\ref{lm:convergence of approximations of U}, as well as the fact that $P$ is Lipschitz continuous, meaning that the total variation of $P$ on this restriction decreases monotonically as well. Another consequence of the diminishing cluster diameter is that $\Vert u[i,j] - u \Vert$ converges monotonically to zero.
\end{proof}

In the results given above, we find that it is in general impossible to uniquely determine each $\hatcheck{U}$ from finitely many samples. Intuitively, given a greater number of distinct points inside $\hatcheck{U}$ and $U \setminus \hatcheck{U}$, it should be possible to more tightly approximate $\hatcheck{U}$. This idea is illustrated in Fig.~\ref{fig:mgf approx comp}. We now state a lemma on the convergence of inner- and outer-approximations of the affected set $\hatcheck{U}$.

\begin{lemma}\label{lm:convergence of approximations of U}
	Consider $\epsilon > 0$, such that a given set of $N_\epsilon \geq m$ distinct pairs $(u, P_{\hatcheck{U}} u)$ denoted by $\mathcal{G} U_{N, \epsilon}$, satisfies Assumptions~\ref{assump:inf-dim affine control degradation}--\ref{assump:lip MGF}, where $P_{\hatcheck{U}}$ is (i) an $N$-mode affine c-CDM, or (ii) a Lipschitz continuous CDM. Let $\mathcal{G} U_N$ be such that for each $u_i$ in $\mathcal{G} U_{N, \epsilon}$, $\bigcup_{i=1}^{N_\epsilon} \mathcal{B}_{\epsilon} (u_i) \supseteq \hatcheck{U}$; i.e., $\epsilon$-balls centered at each sampled control input form a cover of $\hatcheck{U}$. Let $\hatcheck{U}^{N_\epsilon}_{\mathrm{inner}}$ and $\hatcheck{U}^{N_\epsilon}_{\mathrm{outer}}$ denote the corresponding inner- and outer-approximations of $\hatcheck{U}$ using the procedure given in Theorem~\ref{thm:N-mode fin-dim aff c-CDM} from $\mathcal{G} U_{N, \epsilon}$. Then, we have $\hatcheck{U}^{N_{\epsilon}}_{\mathrm{inner}} \subseteq \hatcheck{U}^{N_{\epsilon'}}_{\mathrm{inner}} \subseteq \hatcheck{U}$ and $\hatcheck{U} \subseteq \hatcheck{U}^{N_{\epsilon'}}_{\mathrm{outer}} \subseteq \hatcheck{U}^{N_{\epsilon}}_{\mathrm{outer}}$ for all $\epsilon' < \epsilon$. In addition, we have
	$
		\lim_{\epsilon \to 0} \hatcheck{U}^{N_\epsilon}_{\mathrm{inner}} = \lim_{\epsilon \to 0} \hatcheck{U}^{N_\epsilon}_{\mathrm{outer}} = \hatcheck{U}.
	$
\end{lemma}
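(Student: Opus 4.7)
The plan is to attack the three claims in sequence: (i) validity of the inner/outer approximations for any fixed $\epsilon$, (ii) monotonicity under refinement $\epsilon' < \epsilon$, and (iii) convergence as $\epsilon \to 0$. Throughout, I identify $\hatcheck{U}$ with the sublevel region of its MGF $\hatcheck{\varrho}$ on $\mathcal{B}_1$, so that set inclusions in $\mathscr{U}$ reduce to pointwise inequalities between MGFs on the unit sphere.

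First, I would prove the containment $\hatcheck{U}^{N_\epsilon}_{\mathrm{inner}} \subseteq \hatcheck{U} \subseteq \hatcheck{U}^{N_\epsilon}_{\mathrm{outer}}$. By construction (Theorem~\ref{thm:N-mode fin-dim aff c-CDM}), the outer-approximation is built by applying inequality \eqref{eq:MGF outer} of Proposition~\ref{prop:approx MGF} along $m$ linearly independent sampled directions $\{\hatcheck{l}_*[\jmath_{*,j}]\}$, producing an upper bound $\check{\varrho}^{\mathrm{outer}}$ on $\check{\varrho}$ at every direction reachable by convex combinations. Similarly, \eqref{eq:MGF inner} produces a lower bound $\hat{\varrho}^{\mathrm{inner}}$ on $\hat{\varrho}$. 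Since these are valid pointwise bounds on the true MGF, the corresponding star-shaped sets are genuine inner/outer approximations of $\hatcheck{U}$.

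Second, monotonicity. When $\epsilon' < \epsilon$, the covering condition forces $\mathcal{G} U_{N, \epsilon'}$ to contain at least as many usable samples as $\mathcal{G} U_{N, \epsilon}$ (one may take the refinement to be nested, or otherwise combine both sample sets without loss of generality, as adding more valid Lipschitz bounds can never violate any of the prior ones). The outer-MGF $\check{\varrho}^{\mathrm{outer}}$ is a pointwise minimum over the right-hand side of \eqref{eq:MGF outer} across all admissible sample pairs, so enlarging the sample pool can only decrease it; translating this back to sets yields $\hatcheck{U}^{N_{\epsilon'}}_{\mathrm{outer}} \subseteq \hatcheck{U}^{N_{\epsilon}}_{\mathrm{outer}}$. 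A symmetric argument using the pointwise maximum in \eqref{eq:MGF inner} gives $\hatcheck{U}^{N_{\epsilon}}_{\mathrm{inner}} \subseteq \hatcheck{U}^{N_{\epsilon'}}_{\mathrm{inner}}$. Chaining with step (i) yields the nested chain in the statement.

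Third, convergence. Fix an arbitrary direction $l \in \mathcal{B}_1 \setminus \{0\}$ and let $u := \hatcheck{\varrho}(l) l \in \partial \hatcheck{U}$ (respectively $u \in \hatcheck{U}$ for interior points, handled analogously). By the covering hypothesis, for each $\epsilon > 0$ there exists a sample $u_i \in \mathcal{G} U_{N, \epsilon}$ with $\Vert u - u_i \Vert \leq \epsilon$. Expressing $u_i = \Vert u_i \Vert \hatcheck{l}_i$ and using elementary estimates on the unit sphere away from the origin, $\Vert l - \hatcheck{l}_i \Vert = O(\epsilon / \Vert u \Vert)$, so the Lipschitz slack in \eqref{eq:MGF outer}--\eqref{eq:MGF inner} is $O(\hatcheck{L} \epsilon / \Vert u \Vert)$. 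Applied at $l$ itself (taking the trivial convex combination $\mu = 1$ with $l' = \hatcheck{l}_i$), both the outer and inner MGF estimates at $l$ differ from $\hatcheck{\varrho}(l)$ by this vanishing quantity. Hence $\hatcheck{\varrho}^{\mathrm{inner}} \to \hatcheck{\varrho}$ and $\hatcheck{\varrho}^{\mathrm{outer}} \to \hatcheck{\varrho}$ pointwise; compactness of $\mathcal{B}_1$ together with the uniform Lipschitz constant upgrades this to uniform convergence, which via characterization \eqref{eq:Hausdorff characterization} yields $d_{\mathrm{H}}(\hatcheck{U}^{N_\epsilon}_{\mathrm{inner/outer}}, \hatcheck{U}) \to 0$, proving the desired limits.

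The main obstacle I anticipate is the passage from a Euclidean $\epsilon$-cover of $\hatcheck{U}$ in $\mathscr{U}$ to a corresponding fine cover of the direction sphere where the MGF lives: the map $u \mapsto u/\Vert u \Vert$ is not Lipschitz near the origin, so covering points close to the star center $\varsigma = 0$ requires a separate treatment. I would handle this by exploiting the fact that, near the origin, the inner and outer approximations automatically agree with $\hatcheck{U}$ up to arbitrarily small radius (since the MGF is bounded below away from the origin whenever $\hatcheck{U}$ has nonempty interior through $\varsigma$), so the convergence argument need only be made on an annulus bounded away from $0$, where the direction map is uniformly bi-Lipschitz.
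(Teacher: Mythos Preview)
Your proposal is correct and follows essentially the same approach as the paper: both arguments reduce the set-level claims to pointwise control of the MGF bounds from Proposition~\ref{prop:approx MGF}, observe that the Lipschitz slack terms in \eqref{eq:MGF outer}--\eqref{eq:MGF inner} vanish as the $\epsilon$-cover refines, and conclude Hausdorff convergence via \eqref{eq:Hausdorff characterization}. Your version is considerably more explicit---the paper's proof is a short paragraph that simply asserts monotone tightening of the MGF approximations and invokes Proposition~\ref{prop:approx MGF} without the quantitative $O(\hatcheck{L}\epsilon/\Vert u\Vert)$ estimate or the discussion of the direction map near the origin---but the underlying mechanism is identical.
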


\begin{proof}
	Since it is assumed that the pairs in $\mathcal{G} U_{N, \epsilon}$ are distinct, the approximations of $\check{\varrho}$ and $\hat{\varrho}$ obtained in Theorem~\ref{thm:N-mode fin-dim aff c-CDM} will become increasingly tight for decreasing $\epsilon$, since the expressions derived in Theorem~\ref{prop:approx MGF} will rely increasingly less on the Lipschitz bound assumption. Since $d_{\mathrm{H}} (\hatcheck{U}^{N_\epsilon}_{\mathrm{inner}}, \hatcheck{U}^{N_\epsilon}_{\mathrm{outer}})$ is monotonically decreasing for decreasing $\epsilon$, in the limit of $\epsilon \to 0$, both sequences will converge to $\hatcheck{U}$ in the Hausdorff distance. This follows from the fact that the Hausdorff distance between the boundary of $\hatcheck{U}$ and the sampled points $u$ decreases monotonically with decreasing $\epsilon$, leading to tighter approximations of $\hat{\varrho}$ and $\check{\varrho}$ as per Proposition~\ref{prop:approx MGF}.
\end{proof}

\begin{figure}[t]
	\centering
    \includegraphics[width=0.97\linewidth]{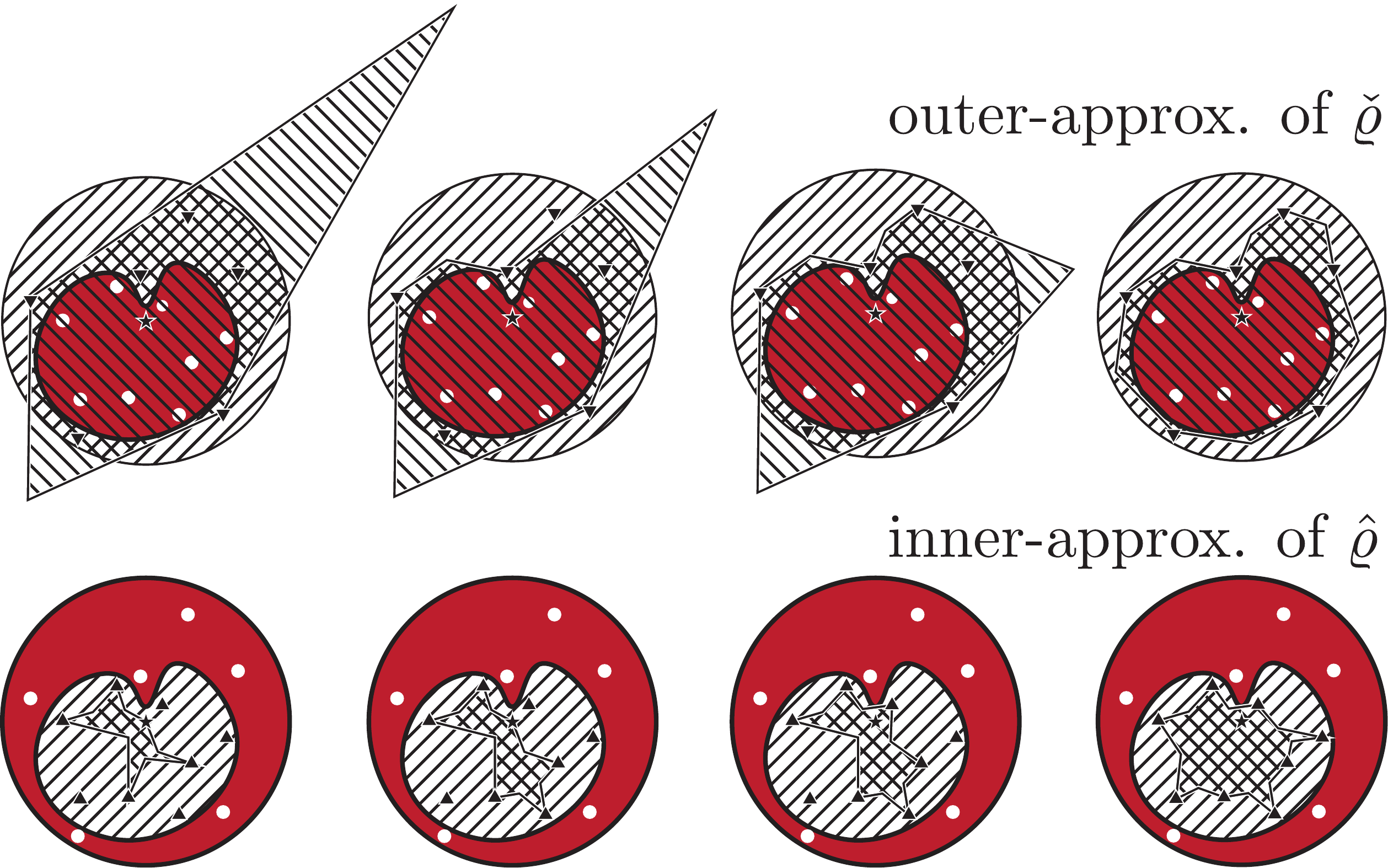}
    \caption{Comparison between inner- and outer-approximations of $\hat{U}$ and $\check{U}$ respectively, based on Proposition~\ref{prop:approx MGF} and Theorem~\ref{thm:N-mode fin-dim aff c-CDM} for an increasing number of samples for a 1-mode c-CDM. Clearly, for a larger number of points of sufficiently dispersed points, increasingly tight approximations are obtained as formalized in Lemma~\ref{lm:convergence of approximations of U}.}
    \label{fig:mgf approx comp}
    \vspace{-0.5cm}
\end{figure}

\begin{remark}
	In Lemma~\ref{lm:convergence of approximations of U}, note that the $\epsilon$-covering argument is required to ensure that the distinct points are sufficiently dispersed; simply considering $N \to \infty$ does not ensure convergence of the Hausdorff distance between the inner- and outer-approximation to zero. This fact can also be observed when looking at Fig.~\ref{fig:mgf approx comp}.
\end{remark}

\section{Application}\label{sec:application}


We consider an infinite-dimensional system based on a 3D model of tissue thermodynamics during electrosurgery \cite{El-Kebir2023c}:
\begin{equation}\label{eq:pde sys}
\begin{split}
	\dot{z}(t, \xi) &= a \nabla^2 z(t, \xi) + q(\xi) u_1, \\
	\dot{d}(t) &= z(t, 1) u_2
\end{split}
\end{equation}
where $u \in [0, 10] \times [0, 1]$. The unit heat source is modeled as $q(\xi) = \frac{1}{\epsilon} \llbracket \xi \in [0, \epsilon] \rrbracket$, for some known $\epsilon > 0$. This model approximates a slab of tissue with the state representing the surface temperature; $u_1$ denotes the input power and $u_2$ denotes the needle depth.

\begin{figure}[t]
	\centering
    \includegraphics[width=0.61\linewidth]{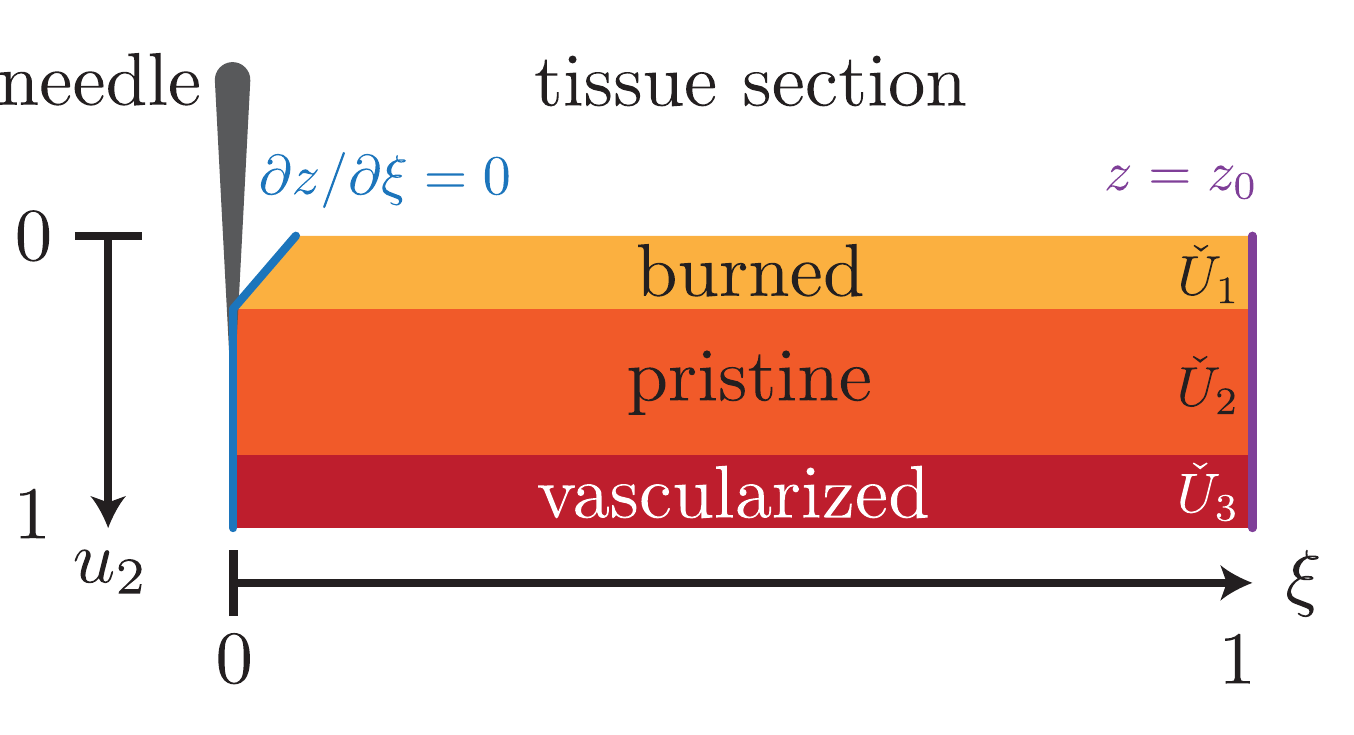}
    \caption{Illustration of the domain on which \eqref{eq:pde sys} acts. There are three regions of degradation, of which the burned and the vascularized region form two; the last region is not shown, but saturates the $u_1$ to 5.}
    \label{fig:pde-ex}
    \vspace{-0.5cm}
\end{figure}

For simplicity, we set the input power $u_1 = 1$, and consider only the needle depth $u_2$ as the free control input. We can express system \eqref{eq:pde sys} as affected by a CDM $P$ as:
\begin{equation*}
	\dot{x}(t, \xi) = \begin{bmatrix}
		\dot{z}(t, \xi) \\
		\dot{d}(t)
	\end{bmatrix}
	=
	\begin{bmatrix}
		a \nabla^2 x_1(t, \xi) \\
		0
	\end{bmatrix}
	+
	\begin{bmatrix}
		q(\xi) & 0 \\
		0 & 1
	\end{bmatrix}
	P
	\begin{bmatrix}
		u_1 \\ u_2
	\end{bmatrix}.
\end{equation*}
We consider a CDM of the form
$
	P u =
	\begin{bmatrix}
	I & P_{12} \\
	0 & I
	\end{bmatrix}
	u,
$
where $P_{12}$ is the map to be identified. We are interested in a 3-mode piecewise linear CDM $P_{12}$, with $\check{U}_1 = [0, 0.25]$, $\check{U}_2 = [0.5, 0.75]$, and $\check{U}_3 = [0.75, 1]$; these regions are illustrated in Fig.~\ref{fig:pde-ex}. Region 1 corresponds to a charred region at the top of the tissue where the needle does not fully contact the tissue. Region 2 is a layer of pristine tissue, where the original dynamics act. Region 3 is a layer of highly vascularized tissue, in which a large fraction of heat that is added to the system gets transported away. We consider a piecewise linear function
\begin{equation*}
\begin{split}
	P_{12} p := \ &(0.25 + 3p) \llbracket p < 0.25 \rrbracket + \llbracket 0.25 \leq p \leq 0.75 \rrbracket \\
	+ \ &(2.5 - 2p) \llbracket p > 0.75 \rrbracket.
\end{split}
\end{equation*}
We consider a sinusoidal control signal for the probe depth with a period of 0.3 seconds, $u_2(t) = (1-\cos(20 \pi t / 3))/2$, and a state--input sampling frequency of 20 Hz. We assume stochastic sampling periods, where the time is perturbed with a uniform 0.01 second error to model signal processing delays. The underlying goal of this application is to perform passive probing of the affected tissue layers and reconfigure the thermodynamics model to account for tissue damage, as is commonly encountered in electrosurgery. Fig.~\ref{fig:dh-eps} shows on the left the Hausdorff distance between each affected region over time to show that approximations become tighter with time, according to the decreasing minimal covering radius $\epsilon$ (right), as shown in Lemma~\ref{lm:convergence of approximations of U}. After three samples in each region, we uniquely identify the appropriate affine map, but the inner-approximation of the affected region is refined passively over time.

\begin{figure}[t]
	\centering
    \includegraphics[width=0.49\linewidth]{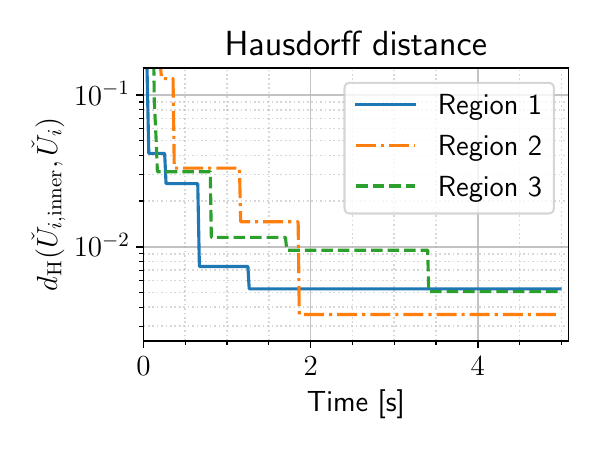}
    \includegraphics[width=0.49\linewidth]{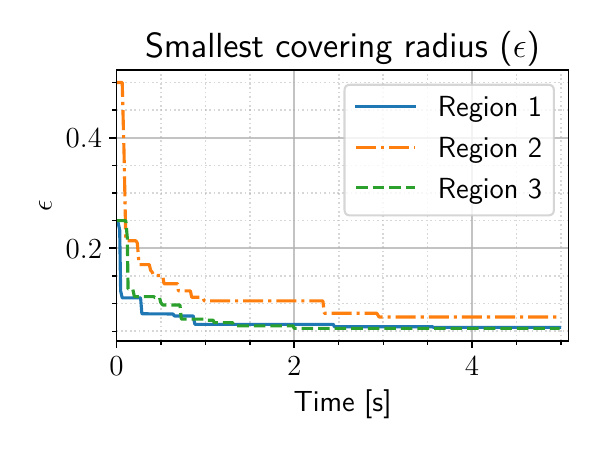}
    \caption{Hausdorff distance error for the inner approximations (left) and minimal covering radius (right) of the three affected regions on which the c-CDM of \eqref{eq:pde sys} acts as a function of time.}
    \label{fig:dh-eps}
    \vspace{-0.5cm}
\end{figure}


\section{Conclusion}\label{sec:conclusion}

In this work, we have introduced the concept of a \emph{control authority degradation map} (CDM). We have proved conditions on the identifiability of a broad class CDMs, including $N$-mode affine CDMs and Lipschitz continuous CDMs, for a class of affine-in-control nonlinear systems. Based on the identifiability results, we have formulated a constructive method for reconstruction or approximating CDMs, with explicit bounds on the approximation error. Our CDM identification method is executable in real time, and is guaranteed to monotonically decrease in error as more full-state observations become available. 
We apply our methods of CDM identification and viabilization of control signals to a controlled partial differential equation motivated by an electrosurgical process, showing how our guaranteed CDM reconstruction quality improves over time.

\bibliographystyle{IEEEtran}
\bibliography{root}

%

%
%
%





\end{document}